\newcommand{\ra}{\ensuremath{\rightarrow}}
\newcommand{\al}{\ensuremath{\alpha}}
\renewcommand{\O}[1]{\mathcal{O}(#1)}
\renewcommand{\W}[1]{\Omega(#1)}
\newcommand{\Th}[1]{\Theta(#1)}
\newcommand{\dash}{\textemdash}
\newcommand{\sect}[1]{\emph{{#1}}.\dash}
\newtheorem*{theorem*}{Theorem}
\newtheorem*{lemma*}{Lemma}
\crefname{section}{Sec.}{Secs.}
\crefname{section}{Section}{Sections}
\newcommand{\tghz}{t_{\text{GHZ}}}
\newcommand\footnoteref[1]{\protected@xdef\@thefnmark{\ref{#1}}\@footnotemark}
\algnewcommand\algorithmicparfor{\textbf{parfor}}
\algnewcommand\algorithmicpardo{\textbf{do}}
\algnewcommand\algorithmicendparfor{\textbf{end\ parfor}}
\begin{document}

\title{Implementing a Fast Unbounded Quantum Fanout Gate Using Power-Law Interactions}
\author{Andrew~Y.~Guo}
\author{Abhinav~Deshpande}
\author{Su-Kuan~Chu}
\author{Zachary~Eldredge}
\author{Przemyslaw~Bienias}
\author{Dhruv~Devulapalli}
\affiliation{Joint Center for Quantum Information and Computer Science, NIST/University of Maryland, College Park, MD 20742, USA}
\affiliation{Joint Quantum Institute, NIST/University of Maryland, College Park, MD 20742, USA}

\author{Yuan~Su}
\author{Andrew~M.~Childs}
\affiliation{Joint Center for Quantum Information and Computer Science, NIST/University of Maryland, College Park, MD 20742, USA}
\affiliation{Department of Computer Science, University of Maryland, College Park, MD 20742, USA}
\affiliation{Institute for Advanced Computer Studies, University of Maryland, College Park, MD 20742, USA}

\author{Alexey~V.~Gorshkov}
\affiliation{Joint Center for Quantum Information and Computer Science,
NIST/University of Maryland, College Park, MD 20742, USA}
\affiliation{Joint Quantum Institute, NIST/University of Maryland, College Park, MD 20742, USA}

\date{\today}

\begin{abstract}

The standard circuit model for quantum computation presumes the ability to directly perform gates between arbitrary pairs of qubits, which is unlikely to be practical for large-scale experiments.
Power-law interactions with strength decaying as $1/r^\alpha$ in the distance $r$ provide an experimentally realizable resource for information processing, whilst still retaining long-range connectivity.
We leverage the power of these interactions to  implement a fast quantum fanout gate with an arbitrary number of targets.
Our implementation allows the quantum Fourier transform (QFT) and Shor's algorithm to be performed on a $D$-dimensional lattice in time logarithmic in the number of qubits for interactions with $\al$\,$\le$\,$D$.
As a corollary, we show that power-law systems with $\alpha$\,$\le$\,$D$ are difficult to simulate classically even for short times, under a standard assumption that factoring is classically intractable.
Complementarily, we develop a new technique to give a general lower bound, linear in the size of the system, on the time required to implement the QFT and the fanout gate in systems that are constrained by a linear light cone.
This allows us to prove an asymptotically tighter lower bound for long-range systems than is possible with previously available techniques.
\end{abstract}

\maketitle

In the standard circuit model for quantum computation, the size of a quantum circuit is measured in terms of the number of gates it contains.
In typical quantum systems, coherence times are a limitation, so low-depth (``shallow'') quantum circuits prioritized for the regime of noisy intermediate-scale quantum computers are more desirable \cite{Preskill2018}.
Various proposed models of quantum computation are equivalent up to polynomial overhead, making the definition of the complexity class $\BQP$ insensitive to the model of computation \cite{Bernstein1993,Raussendorf2001,Raussendorf2003,Hoyer2005}.

However, these models can differ in the precise complexity of operations.
As a drastic example, suppose we are given access to a fast unbounded fanout gate
represented by the map
$\ket{x} \ket{y_1} \ket{y_2} \dots$\,$\mapsto$\,$\ket{x}\ket{y_1 \oplus x} \ket{y_2 \oplus x} \cdots$
where the $\oplus$ operator denotes bitwise XOR (bit $y_i$ is flipped if $x$\,$=$\,$1$ and not flipped otherwise). This operation is a reversible analog of a gate that copies $x$ to registers $y_1, y_2, \dots$.
By ``unbounded,'' we mean that there is no limit on the number of bits that can be targeted by this operation.

The unbounded fanout gate makes it possible for constant-depth quantum circuits to perform a number of fundamental quantum arithmetic operations \cite{Hoyer2005}.
Furthermore, unbounded fanout can also reduce the quantum Fourier transform (QFT)---a subroutine of a large class of quantum algorithms, including most famously Shor’s algorithm for integer factorization \cite{Shor1997}---to constant depth as well.
In fact, it enables implementing the entirety of Shor's algorithm by constant-depth quantum circuits with access to a polynomial amount of classical pre- and post-processing \cite{note_precomputation}.

While the unbounded fanout gate is clearly a powerful resource for quantum computation, its efficient implementation in physically realizable architectures has not been studied in great depth.
In the standard circuit model of digital quantum computation---where one may apply single-qubit and two-qubit gates from a standard gate set on arbitrary non-overlapping subsets of the qubits---a fanout gate on $n$ qubits can be implemented optimally in $\Th{\log{n}}$-depth \cite{bigO,Broadbent2009a}.
We also consider the Hamiltonian model, in which one may apply single-qubit and two-qubit Hamiltonian terms.
In particular, in the Hamiltonian model with all-to-all unit-strength interactions, one can implement the fanout gate in constant time \cite{Fenner2003,Fenner2004}.
However, the assumption of being able to directly apply an interaction between two arbitrarily distant qubits does not hold in practice for large quantum computing architectures \cite{Monroe2014,Linke2017,Bapat2018,Childs2019c}.
Mapping these circuits to restricted architectures inevitably leads to overheads and potentially even different asymptotic scaling.
In $D$-dimensional nearest-neighbor architectures, for example, the unbounded fanout gate can only be implemented unitarily in depth $\Th{n^{1/D}}$ \cite{Rosenbaum2013}.
While there exist protocols that can implement the fanout gate in constant depth on these architectures \cite{Pham2013}, these proposals require intermediate measurements along with classical control---a resource that may be inaccessible in certain near-term experimental systems \cite{Arute2019}.
The overheads resulting from such physical restrictions could therefore limit the potential asymptotic speed-up from a fast quantum fanout.

Systems with power-law interactions present an opportunity for realizing these speed-ups.
Specifically, for a lattice of qubits in $D$ dimensions, suppose the interaction strengths between pairs of qubits separated by a distance $r$ are weighted by a power-law decaying function $1/r^\alpha$.

These long-range interactions are native to many experimental quantum systems and have attracted interest as potential resources for faster quantum information processing. Examples of long-range interactions include dipole-dipole and van der Waals interactions between Rydberg atoms~\cite{Saffman2010,Weimer2012}, and dipole-dipole interactions between polar molecules~\cite{Yan2013} and between defect centers in diamond~\cite{Yao2012,Weimer2012}.
Previous works have explored the acceleration of quantum information processing using strong and tunable power-law interactions between Rydberg states~\cite{Isenhower2011,Molmer2011,Petrosyan2017,Gulliksen2015,Muller2009,Young2020,Levine2019}, which can implement $k$-local gates that control or target simultaneously $k \gg 10$ qubits.
Those gates still have a finite spatial range and can therefore only give a constant-factor speed-up over nearest-neighbor architectures.
Recently, Refs.~\cite{Eldredge2017,Tran2020,Kuwahara2019a} gave protocols that take advantage of power-law interactions to quickly transfer a quantum state across a lattice.
As we will show, it is also possible to leverage the power of these interactions to implement quantum gates asymptotically faster than is possible with finite-range interactions \cite{note_ions}.

In this Letter, we describe a method of implementing the unbounded fanout gate using engineered Hamiltonians with power-law interactions.
As an application of this protocol, we show that simulating strongly long-range systems with $\al$\,$\le$\,$D$ for logarithmic time or longer is classically intractable, if factoring is classically hard.
We also develop a new technique that allows us to prove the tightest known lower bounds for the time required to implement the QFT and unbounded fanout in general lattice architectures.

\sect{Protocol for fast fanout using long-range interactions}We
use a modified version of the state transfer protocol from Ref.~\cite{Eldredge2017} to perform a fanout gate on $n$ logical qubits in $\O{\log{n}}$ time using long-range interactions with $\al$\,$=$\,$D$ and in $\O{1}$ time for $\al$\,$<$\,$D$.

As an intermediate step, the state-transfer protocol ``broadcasts'' a single-qubit state into the corresponding Greenberger–Horne–Zeilinger (GHZ)-like state:
\begin{equation}
    \label{eq:longrangebroadcast}
	(\psi_0 \ket{0} + \psi_1 \ket{1})\otimes \ket{0 0 \dots 0} \mapsto \psi_0 \ket{0 0  \dots 0} + \psi_1 \ket{1 1 \dots 1},
\end{equation}
where $\psi_0, \psi_1$\,$\in$\,$\mathbb{C}$ and $|\psi_0|^2$\,$+$\,$|\psi_1|^2$\,$=$\,$1$.
This long-range broadcast is achieved by performing a sequence of cascaded controlled-\textsc{Not} (\textsc{CNot}) gates---similar to the standard gate-based implementation of the unbounded fanout gate.
The \textsc{CNot} gate from qubit $i$ to qubit $j$ can be implemented by a Hamiltonian $H_{ij}$\,$=$\,$h_{ij} \ketbra{1}_i$\,$\otimes$\,$X_j$ acting for time $t$\,$=$\,$\pi/(2h_{ij})$, up to a local unitary \cite{Eldredge2017}.
Applying a Hamiltonian $H(t)$\,$=$\,$\sum_{ij} H_{ij}(t)$, which variously turns on/off interactions between pairs of qubits at different times, allows one to implement the broadcast in \cref{eq:longrangebroadcast}.

By using Hamiltonians with long-range interactions $h_{ij}$ satisfying $\norm{h_{ij}}$\,$\le$\,$1/r_{ij}^\al$, it is possible to implement the broadcast operation in sublinear time.
For a system of $n$ qubits, this broadcast time depends on the power-law exponent $\al$ and the dimension of the system $D$ as follows \cite{Eldredge2017}:
\begin{equation}
\label{eq:statetransfertimes}
	\tghz \propto
  \begin{cases}
    n^0 & \alpha < D
    \\ \log n & \alpha = D
    \\ n^{ ( \alpha - D)/D } & D < \alpha \le D + 1
    \\ n^{1/D} & \alpha > D + 1.
  \end{cases}
\end{equation}
We term the broadcast time $t_{\text{GHZ}}$, since it corresponds to the GHZ-state-construction time when $\psi_0$\,$=$\,$\psi_1$\,$=1/\sqrt{2}$.
This long-range broadcast is not the same as fanout because it requires that all intermediary qubits (besides the first qubit) be initialized in the $\ket{0}$ state.
However, as we now show, it is possible to adapt the broadcast protocol to implement the fanout gate in time $\tghz$ using $n$ ancillary qubits.

\begin{figure}[t]
  \begin{centering}
    \includegraphics[scale=0.45]{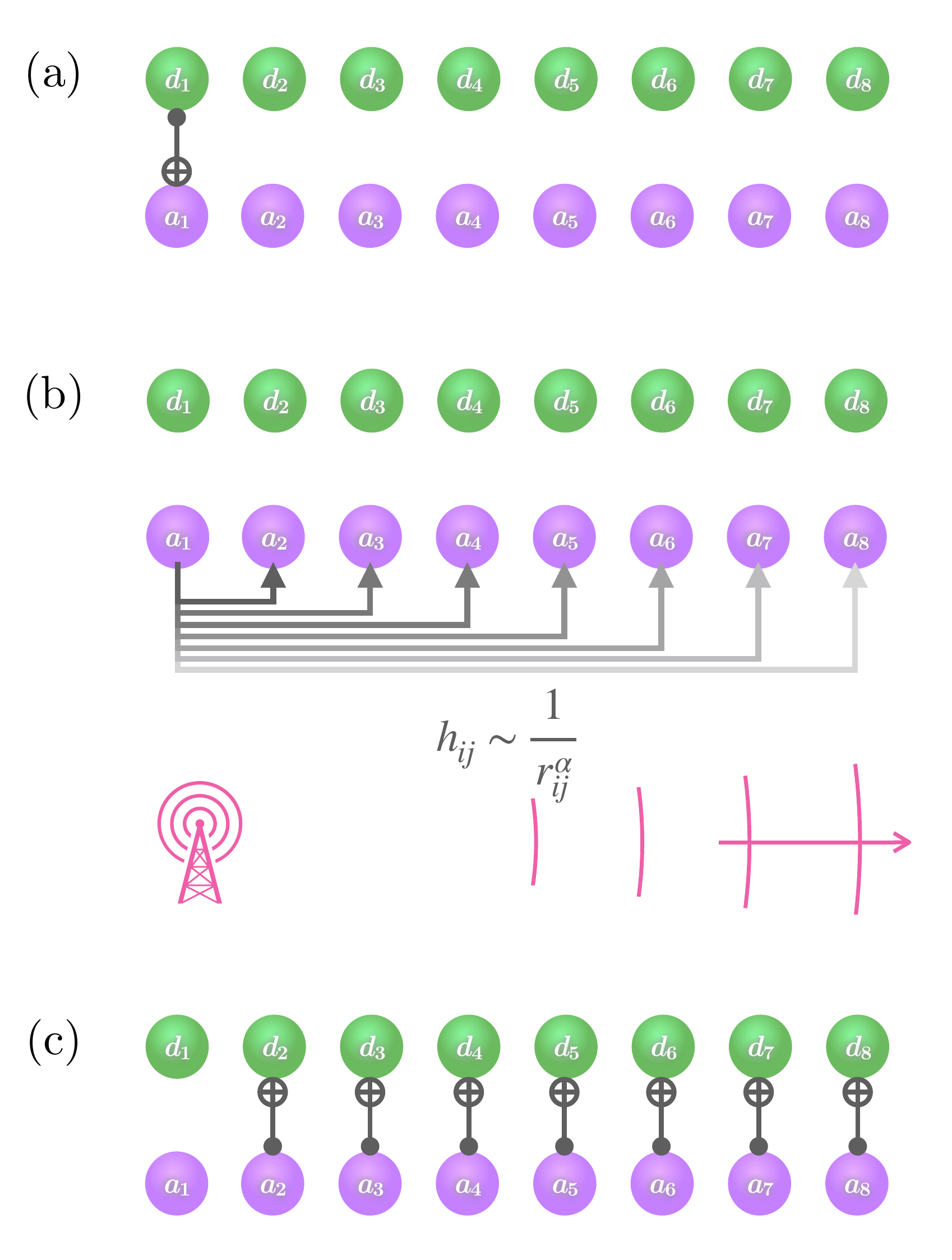}
  \par\end{centering}
  \caption{A protocol for a fast unbounded quantum fanout gate using long-range interactions, depicted here for a 1D lattice.
  The layout consists of a chain of data qubits, along with their adjacent ancillary qubits that are initialized to $\ket{0}$.
  (a) The first step is a local controlled-\textsc{Not} (\textsc{CNot}) gate from $\ket{d_1}$ to $\ket{a_1}$.
  (b) The application of the long-range ``broadcast'' from $\ket{a_1}$ to the rest of the ancillary qubits $\ket{a_i}$ creates a GHZ-like state in Eq.\,(\ref{eq:longrangebroadcast}) for the ancillary qubits together with the first data qubit.
  (c) We apply \textsc{CNot} gates from ancillary qubit $\ket{a_i}$ to the data qubit $\ket{d_i}$, which can be done in parallel. After this step, we reverse process (b) and process (a) to return the ancillary qubits to $\ket{0}$ (not redrawn here). }
  \label{fig:broadcast}
\end{figure}

Consider a system of $n$ qubits arranged on a $D$-dimensional lattice.
Furthermore, assume there are $n$ ancillary qubits, each located adjacent to one of the $n$ original data qubits.
Let us describe the qubits as $\ket{d_1}, \ket{d_2}, \dots, \ket{d_n}$ and $\ket{a_1}, \ket{a_2}, \dots, \ket{a_n}$ for data and ancilla, respectively.
Suppose we want to perform fanout with $\ket{d_1}$ as control, and that all ancillae are guaranteed to be in state $\ket{0}$.
Then the following sequence of operations (depicted graphically in \cref{fig:broadcast}) implements the fanout operation:
\begin{algorithm}[H]
  \caption{Implementing fanout with long-range interactions}
  \label{alg:fastfanout}
  \begin{algorithmic}[1]
   \State Initialize ancillary qubits: $\ket{a_i}$\,$\gets$\,$\ket{0}$ for $i$\,$=$\,$1$ to $n$
  \State \textsc{CNot}($\ket{d_1}$\,$\ra$\,$\ket{a_1}$)
  \Statex $\bigtriangledown$ Apply broadcast operation as shown in \cref{eq:longrangebroadcast} to ancillae:
  \State \textsc{LongRangeBroadcast}($\ket{a_1}$\,$\ra$\,$\ket{a_2},\dots,\ket{a_n}$)
  \Statex $\bigtriangledown$ \textbf{parfor} indicates that \textbf{for}-loop can be implemented in \textbf{par}allel
  \ParFor {$i$\,$=$\,$2$ to $n$}
    \State \textsc{CNot}($\ket{a_i}$\,$\ra$\,$\ket{d_{i}}$)\Comment{\parbox[t]{.45\linewidth}{{Transfer} fanout to data qubits}}
  \EndParFor
  \Statex $\bigtriangledown$ Apply broadcast operation in reverse to uncompute ancillae
  \State \textsc{ReverseLongRangeBroadcast}($\ket{a_2},\dots,\ket{a_n}$\,$\ra$\,$\ket{a_1}$)
  \State \textsc{CNot}($\ket{d_1}$\,$\ra$\,$\ket{a_1}$)
  \end{algorithmic}
\end{algorithm}

In addition to accomplishing fanout, this protocol returns the ancillary qubits to the $\ket{0}$ state.
Modulo $\O{n}$ short-range operations that can be done in parallel in one time step, the protocol requires time $2 t_\mathrm{GHZ}$.
Hence, it can implement the fanout gate in time that is constant for $\alpha$\,$<$\,$D$, logarithmic for $\alpha$\,$=$\,$D$, and polynomial for $\alpha$\,$>$\,$D$.

We briefly comment on the constant-depth implementation of the QFT and Shor's algorithm using the unbounded fanout gate.
An $n$-qubit QFT circuit can be performed with $\O{n\log n}$ gates to $1/\text{poly}(n)$ precision \cite{Coppersmith94}.
Using unbounded fanout, the circuit can be reduced to constant depth with $\O{n\log n}$ ancillary qubits \cite{Hoyer2005}.
We note that including these ancillae in the lattice would not change the asymptotic scaling of our protocol for $\al$\,$\le$\,$D$, since $\tghz$ is either $\O1$ or $\O{\log n}$ in this regime.

\sect{Intractability of classical simulation of strongly long-range systems}As
a corollary, the protocol shows that strongly long-range interacting systems with $\al$\,$\le$\,$D$ evolving for time logarithmic in $n$ or longer are difficult to simulate classically.
By this we mean that to approximately sample from the time-evolved state to within constant total-variation-distance error $\varepsilon$, a classical computer requires time at least superpolynomial in the worst case \cite{note_constanterror}.
The argument operates by a complexity-theoretic reduction from integer factoring, a problem that is assumed to be difficult for classical computers with the ability to use random bits (\textsc{Factoring} $\notin$\,$\BPP$).
The time required to implement the fanout gate using \cref{alg:fastfanout} is $t_\mathrm{FO}$\,$=$\,$\O{1}$ for $\al$\,$<$\,$D$ and $\O{\log n}$ for $\al$\,$=$\,$D$.
It is possible to implement Shor's order-finding algorithm in time $\O{t_\mathrm{FO}}$ using a small amount of classical pre-processing (polynomial in $n$) \cite{Cleve2000,Hoyer2005}.
Using the ability to sample from the output of the order-finding algorithm to error $\varepsilon$\,$<$\,$0.4$\,$<$\,$4/\pi^2$, classically efficient post-processing can output a factor of an $n$-bit integer with probability $\W{1}$ \cite{Shor1997}.
Therefore, if it were possible to efficiently sample from the output distribution in strongly long-range systems for evolution-time $t$\,$=$\,$\mathcal{O}(\log n)$, then it would be possible to factor $n$-bit integers efficiently as well.
The best classical algorithm currently known for factoring an $n$-bit integer takes runtime $\exp[\mathcal{O}(\sqrt{n \log n})]$ \cite{Lenstra1992} and the problem is widely believed to be classically intractable.
This stands in contrast to systems with finite-range interactions in 1D, for which efficient classical simulation is possible up to any time satisfying $t$\,$\leq$\,$\O{\log n}$ \cite{Osborne2006}.
Under the complexity assumption mentioned above (\textsc{Factoring} $\notin$\,$\BPP$), we have shown that this result is not fully generalizable to strongly long-range interacting systems.

\sect{Lower bounds on the time required to implement QFT and fanout}As a way to benchmark our long-range protocol for fanout, we discuss circuit-depth lower bounds in general lattice systems.
Recall that the protocol in \cref{alg:fastfanout} can implement fanout---and by corollary, the QFT---in time $\tghz$, which scales as $\O{\log n}$ for long-range systems with $\alpha$\,$=$\,$D$ and as $\O1$ for $\al$\,$<$\,$D$.
In this section, we show that such fast asymptotic runtimes cannot be achieved in architectures with strict locality constraints.

In Ref.\,\cite{Maslov2007}, Maslov showed that a specific way of implementing the QFT requires $\W n$ depth on the 1D  nearest-neighbor architecture, though this did not rule out other QFT implementations with smaller depth.
Here, we devise a technique that yields a lower bound of $\W {n^{1/D}}$ for the time required to perform a QFT in the Hamiltonian model.
This result strengthens and generalizes Maslov's bound to higher dimensions and to the Hamiltonian model.
In addition, we show that the same lower bound applies even to circuits that perform the QFT approximately.

Our lower bound holds for any lattice system with finite velocities of information spreading, which include short-range interactions (i.e., finite-ranged or exponentially decaying) and power-law interactions with $\al$\,$>$\,$5/2$ in one dimension or $\alpha$\,$>$\,$2D$\,$+$\,$1$ for $D$\,$>$\,$1$ \cite{Lieb1972,Chen2019,Kuwahara2019a}.
Combined with our results above, this implies that systems with strongly long-range interactions can implement the QFT and fanout asymptotically faster than more weakly interacting systems.

The intuitive idea behind our proof is that the QFT unitary can spread out operators in a certain precise sense, a task that can be bounded by the ``Frobenius-norm light cone'' of Ref.\,\cite{Tran2020}.
The fact that this light cone imposes a finite speed limit on information propagation in short-range interacting systems implies that the minimum time $t_2(r)$ required for operator-spreading is proportional to the distance between qubits $r$.
This constrains the computation time for the QFT, denoted $t_\mathrm{QFT}$, by $\W {n^{1/D}}$, from which the evolution-time (and hence circuit-depth) lower bound follows.
The same Frobenius-norm bound also constrains the time required to implement the approximate QFT (AQFT).

We consider the $4^n$-dimensional vector space of $n$-qubit operators for which the set of Pauli operators $\{I,X,Y,Z\}^{\otimes n}$ forms a basis.
We quantify operator spreading outside a region of radius $r$ as follows.
Taking an operator $|O)$ initially supported on site 1, we measure the weight of its time-evolved version, $|O(t))$, on sites at distance $r$ (and beyond) using a \emph{projection} operator $\mathcal{Q}_{r}$, which projects onto strings of Pauli operators that act nontrivially on at least one site at distance $r$ or greater.
We measure the weight of this projected operator $|O_r)$\,$\coloneqq$\,$\mathcal{Q}_{r}|O(t))$ via the (squared) normalized Frobenius norm $\norm{O_r}_F^2$\,$\coloneqq$\,$\Tr(O_r^\dag O_r)/2^n$, which coincides with the Euclidean norm over the operator space, $(O_r|O_r)$ \cite{note_norm}.

We show that operators spread by the action of the QFT can have high weight on distant regions,
which implies that $t_\mathrm{QFT}$\,$\geq$\,$t_2(r)$.
\begin{lemma*} \label{lem_qft_weight}
Let $U_\mathrm{QFT}$ be the QFT operator on $n$ qubits arranged in $D$ dimensions such that the first and $n$th qubits are a distance $r$\,$=$\,$\Theta(n^{1/D})$ apart.
Then $U_\mathrm{QFT}^\dag Z_1 U_\mathrm{QFT}$\,$=:$\,$Z_1' $ is an operator with at least constant weight at distance $r$.
\end{lemma*}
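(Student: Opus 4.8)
The plan is to compute the conjugated operator $Z_1' = U_\mathrm{QFT}^\dagger Z_1 U_\mathrm{QFT}$ exactly by passing to the single-register picture, in which the $n$ qubits are regarded as one $d$-dimensional system with $d = 2^n$. On this register the QFT is the discrete Fourier transform $F\colon \ket{x}\mapsto d^{-1/2}\sum_y \omega^{xy}\ket{y}$ with $\omega = e^{2\pi i/d}$, and its defining algebraic property is that it interchanges the generalized Pauli (Weyl) operators. Writing $Z_d = \sum_x \omega^x \ketbra{x}$ for the clock operator and $X_d\colon \ket{x}\mapsto\ket{x+1 \bmod d}$ for the shift, a one-line calculation gives $F^\dagger Z_d F = X_d$; this is the operator-level statement that the Fourier transform converts phases into translations.

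First I would fix the labeling so that physical qubit $1$ carries the least significant bit of $x$ and qubit $n$ the most significant bit; by hypothesis these two sites lie a distance $r = \Theta(n^{1/D})$ apart. Under this identification the single-qubit Pauli $Z_1$ is exactly a power of the clock, $Z_1 = Z_d^{d/2}$, because $Z_d^{d/2} = \operatorname{diag}\big((-1)^x\big)$ records the parity of $x$, i.e.\ its least significant bit.

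The key step is to combine these two facts using the multiplicativity of conjugation:
\begin{equation}
  Z_1' = F^\dagger Z_d^{d/2} F = \big(F^\dagger Z_d F\big)^{d/2} = X_d^{d/2}.
\end{equation}
Since $X_d^{d/2}$ is the shift by $2^{n-1}$, and adding $2^{n-1}$ modulo $2^n$ flips precisely the most significant bit while leaving all others fixed, we conclude $Z_1' = X_n$: a single-qubit Pauli supported entirely on qubit $n$. Because qubit $n$ sits at distance $r$ from qubit $1$, the projection $\mathcal{Q}_r$ leaves $|Z_1')$ invariant, so the weight at distance $r$ is $\norm{Z_1'}_F^2 = \norm{X_n}_F^2 = 1$, a constant (in fact the maximal possible value). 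This proves the lemma, and indeed a sharper exact statement.

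I expect the only genuine subtlety to be bookkeeping with conventions: the direction of the transform ($F$ versus $F^\dagger$), whether the textbook QFT circuit absorbs the final bit-reversal swaps, and the precise assignment of significant bits to physical sites. Any of these can relocate the surviving $X$ from qubit $n$ to another site, so the argument must pin down a convention and guarantee that the distinguished pair (the qubits carrying the least and most significant bits) is exactly the pair placed at distance $r$. None of these affect the conclusion that $Z_1'$ is a single Pauli living entirely at distance $r$; they only permute labels. The downstream consequence\dash that constant weight at distance $r$ forces $t_\mathrm{QFT} \geq t_2(r) = \W{n^{1/D}}$\dash then follows from the finite-velocity Frobenius-norm light cone quoted above.
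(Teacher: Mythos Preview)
Your argument is correct and takes a cleaner route than the paper's. The paper expands $Z_1' = U^\dagger Z_1 U$ matrix element by matrix element, performs the geometric sum over the intermediate index, and observes that only terms with $x_n \neq z_n$ survive, so $Z_1'$ is supported entirely on Pauli strings containing $X_n$ or $Y_n$. You instead invoke the Weyl intertwining relation $F^\dagger Z_d F = X_d$: with qubit $1$ carrying the least significant bit one has $Z_1 = Z_d^{d/2}$, whence $Z_1' = X_d^{d/2} = X_n$ in one line. This is shorter, makes the mechanism (Fourier exchanges clock and shift) transparent, and yields a sharper exact identity rather than merely ``at least constant weight.''

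One small correction to your closing remarks: it is not true that changing conventions ``only permutes labels'' while keeping $Z_1'$ a single Pauli. The paper places the \emph{most} significant bit on qubit $1$; under that ordering $Z_1$ is not a power of the clock $Z_d$, and the resulting $Z_1'$ is the genuinely nonlocal operator in \cref{eq:zoneprime}, not a bare $X_n$. It still has full weight on qubit $n$, so the lemma holds either way, but the clean identity $Z_1' = X_n$ is special to your LSB-first convention. (Equivalently, under the paper's ordering it is $Z_n$, not $Z_1$, that conjugates to the single Pauli $X_1$.) This does not affect correctness\dash you fixed your convention up front\dash but the dependence on bit ordering is a bit stronger than you suggest.
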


\begin{proof}
We explicitly compute the weight of the operator $Z_1'$ on site $n$.
Define $\omega$\,$:=$\,$e^{2\pi i/2^n}$.
The QFT operation on $n$ qubits is defined as $\sum_{y,z=0}^{2^n-1} \ketbra{y}{z} {\omega^{yz}}/{\sqrt{2^n}}$, where we interpret the bit string $y_1, y_2, \ldots, y_n$ as a binary representation of a number $y$\,$\in$\,$\{0,1,\ldots, 2^n -1\}$ in the canonical ordering, i.e.\ $y$\,$=$\,$y_1 2^{n-1}$\,$+$\,$y_2 2^{n-2}$\,$+$\,$\cdots$\,$+$\,$y_n$.
The inverse of the QFT is obtained simply by taking $\omega$\,$\rightarrow$\,$\omega^{-1}$.
First, we compute
\begin{align}
Z_1' &= \sum_{x,y,z=0}^{2^n-1} \ketbra{z}{y} \frac{\omega^{-z y}(-1)^{y_1} \omega^{yx}}{2^n} \ketbra{y}{x}.
\end{align}
We divide the sum over $y$ into two cases, $y_1$\,$=$\,$0$ and $y_1$\,$=$\,$1$:
\begin{align}
Z_1' &= \frac{1}{2^n}\sum_{x,z=0}^{2^n-1} \ketbra{z}{x} \left(\sum_{y:\, y_1 = 0} \omega^{(x-z) y} - \sum_{y:\, y_1= 1} \omega^{(x-z) y}\right).
\end{align}
We can compute these sums separately, giving
\begin{align}
\label{eq:zoneprime}
Z_1' = \frac{1}{2^{n-1}}\sum_{x\neq z}\ketbra{z}{x} \frac{1 - (-1)^{(x-z)}}{1-\omega^{x-z}}.
\end{align}
The nonzero terms in the sum on the right-hand side of \cref{eq:zoneprime} occur when $x-z$ is odd, i.e., when $x_n$\,$-$\,$z_n$\,$=$\,$1 \bmod 2$.
Therefore, the only terms that remain are off-diagonal on qubit $n$ or, equivalently, contain only the $X_n$ or $Y_n$ Pauli operators.
This implies that $Z_1'$ has all its weight on operators acting nontrivially at distance $r$---formally, that $\mathcal{Q}_r |Z_1')$\,$=$\,$|Z_1')$.
\end{proof}

As a result of the Lemma, $t_\mathrm{QFT}$ follows the light cone defined by the normalized Frobenius norm, which is at least as stringent as the Lieb-Robinson light cone.
This leads directly to the following theorem:
\begin{theorem*}  \label{thm:QFTlowerbounds}
  For systems with finite-range or exponentially-decaying interactions in $D$ dimensions, the time required to implement the QFT unitary is lower bounded by $t_\mathrm{QFT}$\,$= $\,$\Omega(r)$, where $r$\,$=$\,$\Theta(n^{1/D})$ is the distance between the first and $n$th qubits.

  For systems with long-range interactions, the Lieb-Robinson light cone gives the following bounds \cite{Kuwahara2019a,Tran2019,Hastings2006}:
  \begin{align}
  \label{eq:LRlowerbounds}
   t_\mathrm{QFT} =
   \begin{cases}
      \W{1}, & \al = D
      \\ \Omega(\log r), & \alpha \in (D,2D]
      \\ \Omega(r^{(\alpha-2D)/(\alpha -D )}), & \alpha \in (2D,2D+1]
      \\ \Omega(r), & \alpha > 2D + 1.
    \end{cases}
  \end{align}
  For one-dimensional long-range systems, the Frobenius light cone gives the following tighter bounds \cite{Tran2020}:
  \begin{align}
  \label{eq:Froblowerbounds}
   t_\mathrm{QFT} =
   \begin{cases}
    \Omega(r), & \alpha > \frac{5}{2}
    \\ \Omega (r^{\alpha -3/2}/\log r), & \alpha \in (\frac{3}{2}, \frac{5}{2}].
   \end{cases}
  \end{align}
\end{theorem*}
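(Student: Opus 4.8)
The plan is to reduce the lower bound on $t_{\mathrm{QFT}}$ to a bound on the speed of operator spreading, using the Lemma as the crucial input. The Lemma shows that $Z_1' = U_{\mathrm{QFT}}^\dagger Z_1 U_{\mathrm{QFT}}$ satisfies $\mathcal{Q}_r |Z_1') = |Z_1')$, so that its entire weight sits at distance $r = \Theta(n^{1/D})$: the projected weight is $(Z_1'|Z_1') = \norm{Z_1'}_F^2 = 1$, a constant independent of $n$. If the QFT is implemented by evolving under some (time-dependent) Hamiltonian for total time $t_{\mathrm{QFT}}$, then the Heisenberg-evolved operator $|Z_1(t))$ starts fully localized on site 1 and ends, at $t = t_{\mathrm{QFT}}$, as $|Z_1')$ with unit weight at distance $r$. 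A light cone bounds how fast this can happen.

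First I would recall the relevant operator-spreading bounds. Let $t_2(r)$ denote the minimal time for an operator initially supported on one site to develop constant weight $(O_r|O_r)$ at distance $r$ or beyond. For short-range interactions the Lieb-Robinson bound gives a linear light cone: $(O_r|O_r)$ is exponentially suppressed whenever $t = o(r)$, so $t_2(r) = \W{r}$. For long-range interactions the corresponding light-cone times are supplied by the cited works---the Lieb-Robinson light cone of Refs.\ \cite{Kuwahara2019a,Tran2019,Hastings2006} yields the $\alpha$-dependent scalings of \cref{eq:LRlowerbounds}, and the sharper Frobenius-norm light cone of Ref.\ \cite{Tran2020} yields the tighter one-dimensional scalings of \cref{eq:Froblowerbounds}.

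Second I would close the argument by contraposition. Because the Lemma guarantees that $Z_1'$ has unit weight at distance $r$---which exceeds the vanishing weight permitted below the light-cone time---the evolution implementing the QFT must run at least until the light cone reaches distance $r$, i.e.\ $t_{\mathrm{QFT}} \geq t_2(r)$. Substituting the explicit form of $t_2(r)$ in each regime of $(\alpha, D)$ reproduces every line of \cref{eq:LRlowerbounds,eq:Froblowerbounds}, and the linear short-range light cone gives $t_{\mathrm{QFT}} = \W{r} = \W{n^{1/D}}$. For the approximate QFT I would note that an implementation $\eps$-close to $U_{\mathrm{QFT}}$ produces a Heisenberg-evolved $Z_1$ within $\O{\eps}$ of $Z_1'$ in normalized Frobenius norm; hence for a sufficiently small constant $\eps$ its weight at distance $r$ is still a constant, and the same light-cone bound applies verbatim.

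The main obstacle is making the step $t_{\mathrm{QFT}} \geq t_2(r)$ quantitatively airtight. The light-cone theorems are phrased as upper bounds on the weight an initially localized operator can leak past radius $r$ in time $t$; to turn this into a time lower bound I must check that the constant (indeed unit) weight from the Lemma strictly exceeds the weight allowed below $t_2(r)$, uniformly in $n$, and track how $t_2(r)$ depends on the chosen weight threshold. This is delicate precisely in the intermediate regimes---$\alpha \in (2D, 2D+1]$ for the Lieb-Robinson bound and $\alpha \in (\tfrac{3}{2}, \tfrac{5}{2}]$ for the Frobenius bound---where the light cone is polynomially ``fuzzy'' rather than sharp, so the threshold constants propagate into the exponents and must be handled with care.
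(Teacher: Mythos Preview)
Your proposal is correct and follows essentially the same route as the paper: invoke the Lemma to show that $Z_1'$ carries unit weight at distance $r$, then feed this into the cited Lieb-Robinson and Frobenius light-cone bounds to conclude $t_{\mathrm{QFT}}\ge t_2(r)$, and handle the AQFT by noting the weight degrades only by $\O{\varepsilon}$. Your closing caveat about threshold constants in the polynomially fuzzy regimes is a fair point of care, but the paper simply invokes the cited bounds without elaborating on this, so you are not missing anything the paper supplies.
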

We note that the lower bounds in the Theorem also apply to the fanout time, $t_\mathrm{FO}$, through the observation that fanout also performs operator spreading (using $X_1$ instead of $Z_1$).
We emphasize that these bounds pertain to the Hamiltonian model, where commuting terms can be implemented simultaneously and state transfer could in theory be done in $o(1)$ time for sufficiently small $\alpha$.

We observe that the QFT can implement quantum state transfer as well.
The goal of state transfer is to find a unitary $V$ such that
$V \bigl(\ket{\psi} \otimes \ket{0}^{\otimes n-1}\bigr)$\,$=$\,$\ket{0}^{\otimes n-1}$\,$\otimes $\,$\ket{\psi}$ \cite{Eldredge2017,Epstein2017}.
The unitary $V$\,$=$\,$H^{\otimes n} U_\mathrm{QFT}$ (where $H$ represents the single-qubit Hadamard gate) satisfies this definition of state transfer.

For the AQFT, the lower bound follows in a similar fashion.
The circuit that implements the QFT approximately with error $\varepsilon$ can be represented by a unitary $\tilde{U}_\mathrm{QFT}$ such that \cite{Cleve2000}
\begin{equation}
  \label{eq:AQFTdef}
  \|U_\mathrm{QFT}-\tilde {U}_\mathrm{QFT}\|\leq \varepsilon.
\end{equation}
Consider the operator $\tilde {U}_\mathrm{QFT}^\dag Z_1 \tilde {U}_\mathrm{QFT}$.
We argue that this operator is spread out as well.
From \cref{eq:AQFTdef}, it follows that
\begin{equation}
  \norm{\tilde {U}^\dag Z_1 \tilde {U} - U^\dag Z_1 U} \leq 2\varepsilon\norm{Z_1} = \O{\varepsilon},
\end{equation}
where $\| \cdot\|$ indicates the operator norm and we let $U$\,$\coloneqq$\,${U}_\mathrm{QFT}$ and $\tilde U$\,$\coloneqq$\,$\tilde {U}_\mathrm{QFT}$ for simplicity.
Since the normalized Frobenius norm is upper-bounded by the operator norm, we have
\begin{align}
\norm{\tilde {U}^\dag Z_1 \tilde {U} - U^\dag Z_1 U}_F = \O{\varepsilon}.
\end{align}
Moving to the vector space of operators and applying the projector $\mathcal{Q}_r$ onto operators with support beyond radius $r$ yields
\begin{align}
\norm{\mathcal{Q}_r |\tilde {U}^\dag Z_1 \tilde {U}) - \mathcal{Q}_r |U^\dag Z_1 U)} = \O{\varepsilon},
\end{align}
where $\|\cdot\|$ is the Euclidean norm and using $\|\mathcal{Q}_r\|$\,$=$\,$1$. By the triangle inequality, we have
\begin{align}
\label{eq:frobeniuslowerbound}
 \norm{\mathcal{Q}_r |\tilde {U}^\dag Z_1 \tilde {U})} & \geq \norm{\mathcal{Q}_r |U^\dag Z_1 U)} -\O{\varepsilon}
 \\ &= 1-\O{\varepsilon}.
\end{align}
\Cref{eq:frobeniuslowerbound} implies that the operator $Z_1$ after conjugating by the approximate QFT has large support on sites beyond distance $r$ as well, implying that the lower bounds in \cref{eq:LRlowerbounds,eq:Froblowerbounds} also hold for the approximate QFT.

\sect{Conclusions and Outlook}In
summary, we have developed a fast protocol for the unbounded fanout gate using power-law interactions.
For $\al$\,$\le$\,$D$, the protocol can perform the gate asymptotically faster than is possible with short-range interactions.
In particular, for experimentally realizable dipole-dipole interactions with $\al$\,$=$\,$3$, it allows the quantum Fourier transform and Shor's algorithm to be performed in logarithmic time in three dimensions.
As a corollary, we showed that classical simulation of strongly long-range systems with $\al$\,$\le$\,$D$ for time $t$\,$=$\,$\O{\log{n}}$ is at least as difficult as integer factorization, which is believed to be intractable in polynomial time.
Currently, the question of whether the fanout protocol is optimal remains open.
The best lower bound gives $\W{n^{\al/D-1}\log n} $ for $\al$\,$<$\,$D$  and $\W{1}$ for $\al$\,$=$\,$D$ \cite{Guo2019a}.
We conjecture that the broadcast time $\tghz$\,$=$\,$\O{\log n}$ is indeed the tightest that can be achieved for $\al$\,$=$\,$D$.

In addition, we gave a general $\W{n^{1/D}}$ lower bound on the time to implement fanout, as well as the exact or approximate QFT, for all systems constrained by a linear light cone.
In doing so, we used the state-of-the-art Frobenius bound from \cite{Tran2020}, which has been shown to be tighter than the Lieb-Robinson bound for certain long-range interacting systems in one dimension.
For higher dimensions, the conjectured critical value of $\alpha$ above which a linear light cone exists is $3D/2$\,$+$\,$1$.
If this generalization of the Frobenius bound were to hold, it would immediately tighten our lower bounds on the QFT and fanout.
Among other applications, this would imply the impossibility of implementing fanout in $o(n)$ time in cold-atom systems with van der Waals interactions ($\alpha$\,$=$\,$6$ in $D$\,$=$\,$3$ dimensions).
We also note the room for improvement in the catalogue of Lieb-Robinson bounds in  \cref{eq:LRlowerbounds}---especially for the power-law light cone between $\alpha$\,$\in$\,$(2D,2D+1]$.
The range of validity likely extends below $\al$\,$=$\,$2D$, and the exponent is suboptimal---at $\al$\,$=$\,$2D$\,$+$\,$1$, it is still a factor of $1/(D$\,$+$\,$1)$ from giving a linear light cone.

As a final remark, we have derived our lower bounds on $t_\textrm{QFT}$ under the assumption that the first and last qubits of the QFT are separated by a distance of $r$\,$=$\,$\Theta(n^{1/D})$.
However, other mappings of computational qubits to lattice qubits could potentially lead to faster implementations.
For example, consider the mapping onto a one-dimensional chain of qubits wherein the second half of the chain is interleaved in reverse order with the first half \cite{note_interleaving}.
Applying the QFT to a product state in this layout results in a state with two-qubit correlations that decay exponentially in the distance between the qubits.
In this case, our lower bound techniques cannot rule out the possibility of $t_\textrm{QFT}$\,$=$\,$o(n)$ for short-range interacting Hamiltonians.
This suggests that $t_\textrm{QFT}$ could depend strongly on qubit placement.
Given that the QFT is typically used as a subroutine for more complex algorithms, it may not always be possible to reassign qubits without incurring costs elsewhere in the circuit.
Still, it would be interesting to investigate whether careful qubit placement could yield a faster QFT.

\sect{Acknowledgements}We
would like to thank Minh Tran, Adam Ehrenberg, Chi-Fang Chen, Andrew Lucas, and Fred Chong for helpful discussions.
AYG would like to thank Dmitri Maslov and Yunseong Nam for informative discussions regarding the QFT.
AYG and DD are supported by the NSF Graduate Research Fellowship Program under Grant No.\ DGE-1840340.
We acknowledge support by the U.S. Department of Energy, Office of Science, Office of Advanced Scientific Computing Research, Quantum Testbed Pathfinder (award number DE-SC0019040) and Accelerated Research in Quantum Computing (award No.\ DE-SC0020312) programs, and the U.S. Army Research Office (MURI award number W911NF-16-1-0349).
AYG, AD, SKC, ZE, PB, and AVG additionally acknowledge support by AFOSR MURI, AFOSR,  DoE BES Materials and Chemical Sciences Research for Quantum Information Science program (award No.\ DE-SC0019449), NSF PFCQC program, ARL CDQI, and NSF PFC at JQI.
SKC also acknowledges the support from the Studying Abroad Scholarship by Ministry of Education in Taiwan.

\bibliography{Fanout,notes}

\begin{thebibliography}{49}%
\makeatletter
\providecommand \@ifxundefined [1]{%
 \@ifx{#1\undefined}
}%
\providecommand \@ifnum [1]{%
 \ifnum #1\expandafter \@firstoftwo
 \else \expandafter \@secondoftwo
 \fi
}%
\providecommand \@ifx [1]{%
 \ifx #1\expandafter \@firstoftwo
 \else \expandafter \@secondoftwo
 \fi
}%
\providecommand \natexlab [1]{#1}%
\providecommand \enquote  [1]{``#1''}%
\providecommand \bibnamefont  [1]{#1}%
\providecommand \bibfnamefont [1]{#1}%
\providecommand \citenamefont [1]{#1}%
\providecommand \href@noop [0]{\@secondoftwo}%
\providecommand \href [0]{\begingroup \@sanitize@url \@href}%
\providecommand \@href[1]{\@@startlink{#1}\@@href}%
\providecommand \@@href[1]{\endgroup#1\@@endlink}%
\providecommand \@sanitize@url [0]{\catcode `\\12\catcode `\$12\catcode
  `\&12\catcode `\#12\catcode `\^12\catcode `\_12\catcode `\%12\relax}%
\providecommand \@@startlink[1]{}%
\providecommand \@@endlink[0]{}%
\providecommand \url  [0]{\begingroup\@sanitize@url \@url }%
\providecommand \@url [1]{\endgroup\@href {#1}{\urlprefix }}%
\providecommand \urlprefix  [0]{URL }%
\providecommand \Eprint [0]{\href }%
\providecommand \doibase [0]{http://dx.doi.org/}%
\providecommand \selectlanguage [0]{\@gobble}%
\providecommand \bibinfo  [0]{\@secondoftwo}%
\providecommand \bibfield  [0]{\@secondoftwo}%
\providecommand \translation [1]{[#1]}%
\providecommand \BibitemOpen [0]{}%
\providecommand \bibitemStop [0]{}%
\providecommand \bibitemNoStop [0]{.\EOS\space}%
\providecommand \EOS [0]{\spacefactor3000\relax}%
\providecommand \BibitemShut  [1]{\csname bibitem#1\endcsname}%
\let\auto@bib@innerbib\@empty
\bibitem [{\citenamefont {Preskill}(2018)}]{Preskill2018}%
  \BibitemOpen
  \bibfield  {author} {\bibinfo {author} {\bibfnamefont {J.}~\bibnamefont
  {Preskill}},\ }\href {\doibase 10.22331/q-2018-08-06-79} {\bibfield
  {journal} {\bibinfo  {journal} {Quantum}\ }\textbf {\bibinfo {volume} {2}},\
  \bibinfo {pages} {79} (\bibinfo {year} {2018})}\BibitemShut {NoStop}%
\bibitem [{\citenamefont {Bernstein}\ and\ \citenamefont
  {Vazirani}(1993)}]{Bernstein1993}%
  \BibitemOpen
  \bibfield  {author} {\bibinfo {author} {\bibfnamefont {E.}~\bibnamefont
  {Bernstein}}\ and\ \bibinfo {author} {\bibfnamefont {U.}~\bibnamefont
  {Vazirani}},\ }in\ \href {\doibase 10.1145/167088.167097} {\emph {\bibinfo
  {booktitle} {Proceedings of the Twenty-Fifth Annual {{ACM}} Symposium on
  {{Theory}} of Computing - {{STOC}} '93}}}\ (\bibinfo  {publisher} {{ACM
  Press}},\ \bibinfo {address} {{San Diego, California, United States}},\
  \bibinfo {year} {1993})\ pp.\ \bibinfo {pages} {11--20}\BibitemShut {NoStop}%
\bibitem [{\citenamefont {Raussendorf}\ and\ \citenamefont
  {Briegel}(2001)}]{Raussendorf2001}%
  \BibitemOpen
  \bibfield  {author} {\bibinfo {author} {\bibfnamefont {R.}~\bibnamefont
  {Raussendorf}}\ and\ \bibinfo {author} {\bibfnamefont {H.~J.}\ \bibnamefont
  {Briegel}},\ }\href {\doibase 10.1103/PhysRevLett.86.5188} {\bibfield
  {journal} {\bibinfo  {journal} {Phys. Rev. Lett.}\ }\textbf {\bibinfo
  {volume} {86}},\ \bibinfo {pages} {5188} (\bibinfo {year}
  {2001})}\BibitemShut {NoStop}%
\bibitem [{\citenamefont {Raussendorf}\ \emph {et~al.}(2003)\citenamefont
  {Raussendorf}, \citenamefont {Browne},\ and\ \citenamefont
  {Briegel}}]{Raussendorf2003}%
  \BibitemOpen
  \bibfield  {author} {\bibinfo {author} {\bibfnamefont {R.}~\bibnamefont
  {Raussendorf}}, \bibinfo {author} {\bibfnamefont {D.~E.}\ \bibnamefont
  {Browne}}, \ and\ \bibinfo {author} {\bibfnamefont {H.~J.}\ \bibnamefont
  {Briegel}},\ }\href {\doibase 10.1103/PhysRevA.68.022312} {\bibfield
  {journal} {\bibinfo  {journal} {Phys. Rev. A}\ }\textbf {\bibinfo {volume}
  {68}},\ \bibinfo {pages} {022312} (\bibinfo {year} {2003})}\BibitemShut
  {NoStop}%
\bibitem [{\citenamefont {Hoyer}\ and\ \citenamefont
  {Spalek}(2005)}]{Hoyer2005}%
  \BibitemOpen
  \bibfield  {author} {\bibinfo {author} {\bibfnamefont {P.}~\bibnamefont
  {Hoyer}}\ and\ \bibinfo {author} {\bibfnamefont {R.}~\bibnamefont {Spalek}},\
  }\href {\doibase 10.4086/toc.2005.v001a005} {\bibfield  {journal} {\bibinfo
  {journal} {Theory of Comput.}\ }\textbf {\bibinfo {volume} {1}},\ \bibinfo
  {pages} {81} (\bibinfo {year} {2005})}\BibitemShut {NoStop}%
\bibitem [{\citenamefont {Shor}(1997)}]{Shor1997}%
  \BibitemOpen
  \bibfield  {author} {\bibinfo {author} {\bibfnamefont {P.~W.}\ \bibnamefont
  {Shor}},\ }\href {\doibase 10.1137/S0097539795293172} {\bibfield  {journal}
  {\bibinfo  {journal} {SIAM J. Comput.}\ }\textbf {\bibinfo {volume} {26}},\
  \bibinfo {pages} {1484} (\bibinfo {year} {1997})}\BibitemShut {NoStop}%
\bibitem [{note_precomputation()}]{note_precomputation}%
  \BibitemOpen
  \bibinfo {note} {While the rate-limiting step for factoring is actually
  modular exponentiation, it is possible to implement this subroutine in
  logarithmic quantum depth with a polynomial number of classical gates as a
  pre-computation step, using a standard universal gate set for quantum
  computing \cite{Cleve2000}. Using the unbounded fanout gate, the quantum
  circuit for modular exponentiation can be reduced to constant depth as well
  \cite{Hoyer2005}.}\BibitemShut {Stop}%
\bibitem [{big()}]{bigO}%
  \BibitemOpen
  \href@noop {} {}\bibinfo {note} {In this Letter, we use the following
  standard asymptotic notations: the ``big-$\mathcal{O}$'' notation $f(x) =
  \O{g(x)}$ indicates the existence of constants $c$ and $N$ such that $0\le
  f(x) \le cg(x)$ for all $x\ge N$; the ``big-$\Omega$'' notation $f(x) =
  \W{g(x)}$ indicates that $g(x) = \O{f(x)}$; the ``big-$\Theta$'' notation
  $f(x)=\Theta(g(x))$ when both $f(x)=\O {g(x)}$ and $f(x) = \W {g(x)}$;
  finally, the ``little-$o$'' notation $f(x) = o(g(x))$ indicates that
  $\lim_{x\to\infty} f(x)/g(x) = 0$. Additionally, we use $\text{poly}(n)$ to
  refer to computational resources that are polynomial in $n$.}\BibitemShut
  {Stop}%
\bibitem [{\citenamefont {Broadbent}\ and\ \citenamefont
  {Kashefi}(2009)}]{Broadbent2009a}%
  \BibitemOpen
  \bibfield  {author} {\bibinfo {author} {\bibfnamefont {A.}~\bibnamefont
  {Broadbent}}\ and\ \bibinfo {author} {\bibfnamefont {E.}~\bibnamefont
  {Kashefi}},\ }\href {\doibase 10.1016/j.tcs.2008.12.046} {\bibfield
  {journal} {\bibinfo  {journal} {Theor. Comput. Sci.}\ }\textbf {\bibinfo
  {volume} {410}},\ \bibinfo {pages} {2489} (\bibinfo {year}
  {2009})}\BibitemShut {NoStop}%
\bibitem [{\citenamefont {Fenner}(2003)}]{Fenner2003}%
  \BibitemOpen
  \bibfield  {author} {\bibinfo {author} {\bibfnamefont {S.~A.}\ \bibnamefont
  {Fenner}},\ }\href@noop {} {\  (\bibinfo {year} {2003})},\ \Eprint
  {http://arxiv.org/abs/arXiv:quant-ph/0309163} {arXiv:quant-ph/0309163}
  \BibitemShut {NoStop}%
\bibitem [{\citenamefont {Fenner}\ and\ \citenamefont
  {Zhang}(2004)}]{Fenner2004}%
  \BibitemOpen
  \bibfield  {author} {\bibinfo {author} {\bibfnamefont {S.~A.}\ \bibnamefont
  {Fenner}}\ and\ \bibinfo {author} {\bibfnamefont {Y.}~\bibnamefont {Zhang}},\
  }\href@noop {} {\  (\bibinfo {year} {2004})},\ \Eprint
  {http://arxiv.org/abs/arXiv:quant-ph/0407125} {arXiv:quant-ph/0407125}
  \BibitemShut {NoStop}%
\bibitem [{\citenamefont {Monroe}\ \emph {et~al.}(2014)\citenamefont {Monroe},
  \citenamefont {Raussendorf}, \citenamefont {Ruthven}, \citenamefont {Brown},
  \citenamefont {Maunz}, \citenamefont {Duan},\ and\ \citenamefont
  {Kim}}]{Monroe2014}%
  \BibitemOpen
  \bibfield  {author} {\bibinfo {author} {\bibfnamefont {C.}~\bibnamefont
  {Monroe}}, \bibinfo {author} {\bibfnamefont {R.}~\bibnamefont {Raussendorf}},
  \bibinfo {author} {\bibfnamefont {A.}~\bibnamefont {Ruthven}}, \bibinfo
  {author} {\bibfnamefont {K.~R.}\ \bibnamefont {Brown}}, \bibinfo {author}
  {\bibfnamefont {P.}~\bibnamefont {Maunz}}, \bibinfo {author} {\bibfnamefont
  {L.-M.}\ \bibnamefont {Duan}}, \ and\ \bibinfo {author} {\bibfnamefont
  {J.}~\bibnamefont {Kim}},\ }\href {\doibase 10.1103/PhysRevA.89.022317}
  {\bibfield  {journal} {\bibinfo  {journal} {Phys. Rev. A}\ }\textbf {\bibinfo
  {volume} {89}},\ \bibinfo {pages} {022317} (\bibinfo {year}
  {2014})}\BibitemShut {NoStop}%
\bibitem [{\citenamefont {Linke}\ \emph {et~al.}(2017)\citenamefont {Linke},
  \citenamefont {Maslov}, \citenamefont {Roetteler}, \citenamefont {Debnath},
  \citenamefont {Figgatt}, \citenamefont {Landsman}, \citenamefont {Wright},\
  and\ \citenamefont {Monroe}}]{Linke2017}%
  \BibitemOpen
  \bibfield  {author} {\bibinfo {author} {\bibfnamefont {N.~M.}\ \bibnamefont
  {Linke}}, \bibinfo {author} {\bibfnamefont {D.}~\bibnamefont {Maslov}},
  \bibinfo {author} {\bibfnamefont {M.}~\bibnamefont {Roetteler}}, \bibinfo
  {author} {\bibfnamefont {S.}~\bibnamefont {Debnath}}, \bibinfo {author}
  {\bibfnamefont {C.}~\bibnamefont {Figgatt}}, \bibinfo {author} {\bibfnamefont
  {K.~A.}\ \bibnamefont {Landsman}}, \bibinfo {author} {\bibfnamefont
  {K.}~\bibnamefont {Wright}}, \ and\ \bibinfo {author} {\bibfnamefont
  {C.}~\bibnamefont {Monroe}},\ }\href {\doibase 10.1073/pnas.1618020114}
  {\bibfield  {journal} {\bibinfo  {journal} {Proc. Natl. Acad. Sci.}\ }\textbf
  {\bibinfo {volume} {114}},\ \bibinfo {pages} {3305} (\bibinfo {year}
  {2017})}\BibitemShut {NoStop}%
\bibitem [{\citenamefont {Bapat}\ \emph {et~al.}(2018)\citenamefont {Bapat},
  \citenamefont {Eldredge}, \citenamefont {Garrison}, \citenamefont
  {Deshpande}, \citenamefont {Chong},\ and\ \citenamefont
  {Gorshkov}}]{Bapat2018}%
  \BibitemOpen
  \bibfield  {author} {\bibinfo {author} {\bibfnamefont {A.}~\bibnamefont
  {Bapat}}, \bibinfo {author} {\bibfnamefont {Z.}~\bibnamefont {Eldredge}},
  \bibinfo {author} {\bibfnamefont {J.~R.}\ \bibnamefont {Garrison}}, \bibinfo
  {author} {\bibfnamefont {A.}~\bibnamefont {Deshpande}}, \bibinfo {author}
  {\bibfnamefont {F.~T.}\ \bibnamefont {Chong}}, \ and\ \bibinfo {author}
  {\bibfnamefont {A.~V.}\ \bibnamefont {Gorshkov}},\ }\href {\doibase
  10.1103/PhysRevA.98.062328} {\bibfield  {journal} {\bibinfo  {journal} {Phys.
  Rev. A}\ }\textbf {\bibinfo {volume} {98}},\ \bibinfo {pages} {062328}
  (\bibinfo {year} {2018})}\BibitemShut {NoStop}%
\bibitem [{\citenamefont {Childs}\ \emph {et~al.}(2019)\citenamefont {Childs},
  \citenamefont {Schoute},\ and\ \citenamefont {Unsal}}]{Childs2019c}%
  \BibitemOpen
  \bibfield  {author} {\bibinfo {author} {\bibfnamefont {A.~M.}\ \bibnamefont
  {Childs}}, \bibinfo {author} {\bibfnamefont {E.}~\bibnamefont {Schoute}}, \
  and\ \bibinfo {author} {\bibfnamefont {C.~M.}\ \bibnamefont {Unsal}},\ }in\
  \href {\doibase 10.4230/LIPIcs.TQC.2019.3} {\emph {\bibinfo {booktitle} {14th
  {{Conference}} on the {{Theory}} of {{Quantum Computation}},
  {{Communication}} and {{Cryptography}} ({{TQC}} 2019)}}},\ Vol.\ \bibinfo
  {volume} {135}\ (\bibinfo {address} {{Dagstuhl, Germany}},\ \bibinfo {year}
  {2019})\ pp.\ \bibinfo {pages} {3:1--3:24}\BibitemShut {NoStop}%
\bibitem [{\citenamefont {Rosenbaum}(2013)}]{Rosenbaum2013}%
  \BibitemOpen
  \bibfield  {author} {\bibinfo {author} {\bibfnamefont {D.~J.}\ \bibnamefont
  {Rosenbaum}},\ }in\ \href {\doibase 10.4230/LIPIcs.TQC.2013.294} {\emph
  {\bibinfo {booktitle} {8th {{Conference}} on the {{Theory}} of {{Quantum
  Computation}}, {{Communication}} and {{Cryptography}} ({{TQC}} 2013)}}},\
  Vol.~\bibinfo {volume} {22}\ (\bibinfo {address} {{Dagstuhl, Germany}},\
  \bibinfo {year} {2013})\ pp.\ \bibinfo {pages} {294--307}\BibitemShut
  {NoStop}%
\bibitem [{\citenamefont {Pham}\ and\ \citenamefont {Svore}(2013)}]{Pham2013}%
  \BibitemOpen
  \bibfield  {author} {\bibinfo {author} {\bibfnamefont {P.}~\bibnamefont
  {Pham}}\ and\ \bibinfo {author} {\bibfnamefont {K.~M.}\ \bibnamefont
  {Svore}},\ }\href@noop {} {\bibfield  {journal} {\bibinfo  {journal} {Quantum
  Info. Comput.}\ }\textbf {\bibinfo {volume} {13}},\ \bibinfo {pages} {937}
  (\bibinfo {year} {2013})},\ \Eprint {http://arxiv.org/abs/1207.6655}
  {arXiv:1207.6655} \BibitemShut {NoStop}%
\bibitem [{\citenamefont {Arute}(2019)}]{Arute2019}%
  \BibitemOpen
  \bibfield  {author} {\bibinfo {author} {\bibfnamefont {F.}~\bibnamefont
  {Arute}, \bibfnamefont {et~al.}},\ }\href {\doibase
  10.1038/s41586-019-1666-5} {\bibfield  {journal} {\bibinfo  {journal}
  {Nature}\ }\textbf {\bibinfo {volume} {574}},\ \bibinfo {pages} {505}
  (\bibinfo {year} {2019})}\BibitemShut {NoStop}%
\bibitem [{\citenamefont {Saffman}\ \emph {et~al.}(2010)\citenamefont
  {Saffman}, \citenamefont {Walker},\ and\ \citenamefont
  {M\o{}lmer}}]{Saffman2010}%
  \BibitemOpen
  \bibfield  {author} {\bibinfo {author} {\bibfnamefont {M.}~\bibnamefont
  {Saffman}}, \bibinfo {author} {\bibfnamefont {T.~G.}\ \bibnamefont {Walker}},
  \ and\ \bibinfo {author} {\bibfnamefont {K.}~\bibnamefont {M\o{}lmer}},\
  }\href {\doibase 10.1103/RevModPhys.82.2313} {\bibfield  {journal} {\bibinfo
  {journal} {Rev. Mod. Phys.}\ }\textbf {\bibinfo {volume} {82}},\ \bibinfo
  {pages} {2313} (\bibinfo {year} {2010})}\BibitemShut {NoStop}%
\bibitem [{\citenamefont {Weimer}\ \emph {et~al.}(2012)\citenamefont {Weimer},
  \citenamefont {Yao}, \citenamefont {Laumann},\ and\ \citenamefont
  {Lukin}}]{Weimer2012}%
  \BibitemOpen
  \bibfield  {author} {\bibinfo {author} {\bibfnamefont {H.}~\bibnamefont
  {Weimer}}, \bibinfo {author} {\bibfnamefont {N.~Y.}\ \bibnamefont {Yao}},
  \bibinfo {author} {\bibfnamefont {C.~R.}\ \bibnamefont {Laumann}}, \ and\
  \bibinfo {author} {\bibfnamefont {M.~D.}\ \bibnamefont {Lukin}},\ }\href
  {\doibase 10.1103/PhysRevLett.108.100501} {\bibfield  {journal} {\bibinfo
  {journal} {Phys. Rev. Lett.}\ }\textbf {\bibinfo {volume} {108}},\ \bibinfo
  {pages} {100501} (\bibinfo {year} {2012})}\BibitemShut {NoStop}%
\bibitem [{\citenamefont {Yan}\ \emph {et~al.}(2013)\citenamefont {Yan},
  \citenamefont {Moses}, \citenamefont {Gadway}, \citenamefont {Covey},
  \citenamefont {Hazzard}, \citenamefont {Rey}, \citenamefont {Jin},\ and\
  \citenamefont {Ye}}]{Yan2013}%
  \BibitemOpen
  \bibfield  {author} {\bibinfo {author} {\bibfnamefont {B.}~\bibnamefont
  {Yan}}, \bibinfo {author} {\bibfnamefont {S.~A.}\ \bibnamefont {Moses}},
  \bibinfo {author} {\bibfnamefont {B.}~\bibnamefont {Gadway}}, \bibinfo
  {author} {\bibfnamefont {J.~P.}\ \bibnamefont {Covey}}, \bibinfo {author}
  {\bibfnamefont {K.~R.~A.}\ \bibnamefont {Hazzard}}, \bibinfo {author}
  {\bibfnamefont {A.~M.}\ \bibnamefont {Rey}}, \bibinfo {author} {\bibfnamefont
  {D.~S.}\ \bibnamefont {Jin}}, \ and\ \bibinfo {author} {\bibfnamefont
  {J.}~\bibnamefont {Ye}},\ }\href {http://dx.doi.org/10.1038/nature12483}
  {\bibfield  {journal} {\bibinfo  {journal} {Nature}\ }\textbf {\bibinfo
  {volume} {501}},\ \bibinfo {pages} {521} (\bibinfo {year}
  {2013})}\BibitemShut {NoStop}%
\bibitem [{\citenamefont {Yao}\ \emph {et~al.}(2012)\citenamefont {Yao},
  \citenamefont {Jiang}, \citenamefont {Gorshkov}, \citenamefont {Maurer},
  \citenamefont {Giedke}, \citenamefont {Cirac},\ and\ \citenamefont
  {Lukin}}]{Yao2012}%
  \BibitemOpen
  \bibfield  {author} {\bibinfo {author} {\bibfnamefont {N.~Y.}\ \bibnamefont
  {Yao}}, \bibinfo {author} {\bibfnamefont {L.}~\bibnamefont {Jiang}}, \bibinfo
  {author} {\bibfnamefont {A.~V.}\ \bibnamefont {Gorshkov}}, \bibinfo {author}
  {\bibfnamefont {P.~C.}\ \bibnamefont {Maurer}}, \bibinfo {author}
  {\bibfnamefont {G.}~\bibnamefont {Giedke}}, \bibinfo {author} {\bibfnamefont
  {J.~I.}\ \bibnamefont {Cirac}}, \ and\ \bibinfo {author} {\bibfnamefont
  {M.~D.}\ \bibnamefont {Lukin}},\ }\href {https://doi.org/10.1038/ncomms1788}
  {\bibfield  {journal} {\bibinfo  {journal} {Nat. Commun.}\ }\textbf {\bibinfo
  {volume} {3}},\ \bibinfo {pages} {800 EP } (\bibinfo {year}
  {2012})}\BibitemShut {NoStop}%
\bibitem [{\citenamefont {Isenhower}\ \emph {et~al.}(2011)\citenamefont
  {Isenhower}, \citenamefont {Saffman},\ and\ \citenamefont
  {M{\o}lmer}}]{Isenhower2011}%
  \BibitemOpen
  \bibfield  {author} {\bibinfo {author} {\bibfnamefont {L.}~\bibnamefont
  {Isenhower}}, \bibinfo {author} {\bibfnamefont {M.}~\bibnamefont {Saffman}},
  \ and\ \bibinfo {author} {\bibfnamefont {K.}~\bibnamefont {M{\o}lmer}},\
  }\href {\doibase 10.1007/s11128-011-0292-4} {\bibfield  {journal} {\bibinfo
  {journal} {Quantum Inf. Process.}\ }\textbf {\bibinfo {volume} {10}},\
  \bibinfo {pages} {755} (\bibinfo {year} {2011})}\BibitemShut {NoStop}%
\bibitem [{\citenamefont {M{\o}lmer}\ \emph {et~al.}(2011)\citenamefont
  {M{\o}lmer}, \citenamefont {Isenhower},\ and\ \citenamefont
  {Saffman}}]{Molmer2011}%
  \BibitemOpen
  \bibfield  {author} {\bibinfo {author} {\bibfnamefont {K.}~\bibnamefont
  {M{\o}lmer}}, \bibinfo {author} {\bibfnamefont {L.}~\bibnamefont
  {Isenhower}}, \ and\ \bibinfo {author} {\bibfnamefont {M.}~\bibnamefont
  {Saffman}},\ }\href {\doibase 10.1088/0953-4075/44/18/184016} {\bibfield
  {journal} {\bibinfo  {journal} {J. Phys. B: At. Mol. Opt. Phys.}\ }\textbf
  {\bibinfo {volume} {44}},\ \bibinfo {pages} {184016} (\bibinfo {year}
  {2011})}\BibitemShut {NoStop}%
\bibitem [{\citenamefont {Petrosyan}(2017)}]{Petrosyan2017}%
  \BibitemOpen
  \bibfield  {author} {\bibinfo {author} {\bibfnamefont {D.}~\bibnamefont
  {Petrosyan}},\ }\href {\doibase 10.1088/1367-2630/aa6170} {\bibfield
  {journal} {\bibinfo  {journal} {New J. Phys.}\ }\textbf {\bibinfo {volume}
  {19}},\ \bibinfo {pages} {033001} (\bibinfo {year} {2017})}\BibitemShut
  {NoStop}%
\bibitem [{\citenamefont {Gulliksen}\ \emph {et~al.}(2015)\citenamefont
  {Gulliksen}, \citenamefont {Dasari},\ and\ \citenamefont
  {M{\o}lmer}}]{Gulliksen2015}%
  \BibitemOpen
  \bibfield  {author} {\bibinfo {author} {\bibfnamefont {J.}~\bibnamefont
  {Gulliksen}}, \bibinfo {author} {\bibfnamefont {D.~B.~R.}\ \bibnamefont
  {Dasari}}, \ and\ \bibinfo {author} {\bibfnamefont {K.}~\bibnamefont
  {M{\o}lmer}},\ }\href {\doibase 10.1140/epjqt17} {\bibfield  {journal}
  {\bibinfo  {journal} {EPJ Quantum Technol.}\ }\textbf {\bibinfo {volume}
  {2}},\ \bibinfo {pages} {4} (\bibinfo {year} {2015})}\BibitemShut {NoStop}%
\bibitem [{\citenamefont {M{\"u}ller}\ \emph {et~al.}(2009)\citenamefont
  {M{\"u}ller}, \citenamefont {Lesanovsky}, \citenamefont {Weimer},
  \citenamefont {B{\"u}chler},\ and\ \citenamefont {Zoller}}]{Muller2009}%
  \BibitemOpen
  \bibfield  {author} {\bibinfo {author} {\bibfnamefont {M.}~\bibnamefont
  {M{\"u}ller}}, \bibinfo {author} {\bibfnamefont {I.}~\bibnamefont
  {Lesanovsky}}, \bibinfo {author} {\bibfnamefont {H.}~\bibnamefont {Weimer}},
  \bibinfo {author} {\bibfnamefont {H.~P.}\ \bibnamefont {B{\"u}chler}}, \ and\
  \bibinfo {author} {\bibfnamefont {P.}~\bibnamefont {Zoller}},\ }\href
  {\doibase 10.1103/PhysRevLett.102.170502} {\bibfield  {journal} {\bibinfo
  {journal} {Phys. Rev. Lett.}\ }\textbf {\bibinfo {volume} {102}},\ \bibinfo
  {pages} {170502} (\bibinfo {year} {2009})}\BibitemShut {NoStop}%
\bibitem [{\citenamefont {Young}\ \emph {et~al.}(2020)\citenamefont {Young},
  \citenamefont {Bienias}, \citenamefont {Belyansky}, \citenamefont {Kaufman},\
  and\ \citenamefont {Gorshkov}}]{Young2020}%
  \BibitemOpen
  \bibfield  {author} {\bibinfo {author} {\bibfnamefont {J.~T.}\ \bibnamefont
  {Young}}, \bibinfo {author} {\bibfnamefont {P.}~\bibnamefont {Bienias}},
  \bibinfo {author} {\bibfnamefont {R.}~\bibnamefont {Belyansky}}, \bibinfo
  {author} {\bibfnamefont {A.~M.}\ \bibnamefont {Kaufman}}, \ and\ \bibinfo
  {author} {\bibfnamefont {A.~V.}\ \bibnamefont {Gorshkov}},\ }\href@noop {} {\
   (\bibinfo {year} {2020})},\ \Eprint {http://arxiv.org/abs/arXiv:2006.02486}
  {arXiv:2006.02486} \BibitemShut {NoStop}%
\bibitem [{\citenamefont {Levine}\ \emph {et~al.}(2019)\citenamefont {Levine},
  \citenamefont {Keesling}, \citenamefont {Semeghini}, \citenamefont {Omran},
  \citenamefont {Wang}, \citenamefont {Ebadi}, \citenamefont {Bernien},
  \citenamefont {Greiner}, \citenamefont {Vuleti\ifmmode~\acute{c}\else
  \'{c}\fi{}}, \citenamefont {Pichler},\ and\ \citenamefont
  {Lukin}}]{Levine2019}%
  \BibitemOpen
  \bibfield  {author} {\bibinfo {author} {\bibfnamefont {H.}~\bibnamefont
  {Levine}}, \bibinfo {author} {\bibfnamefont {A.}~\bibnamefont {Keesling}},
  \bibinfo {author} {\bibfnamefont {G.}~\bibnamefont {Semeghini}}, \bibinfo
  {author} {\bibfnamefont {A.}~\bibnamefont {Omran}}, \bibinfo {author}
  {\bibfnamefont {T.~T.}\ \bibnamefont {Wang}}, \bibinfo {author}
  {\bibfnamefont {S.}~\bibnamefont {Ebadi}}, \bibinfo {author} {\bibfnamefont
  {H.}~\bibnamefont {Bernien}}, \bibinfo {author} {\bibfnamefont
  {M.}~\bibnamefont {Greiner}}, \bibinfo {author} {\bibfnamefont
  {V.}~\bibnamefont {Vuleti\ifmmode~\acute{c}\else \'{c}\fi{}}}, \bibinfo
  {author} {\bibfnamefont {H.}~\bibnamefont {Pichler}}, \ and\ \bibinfo
  {author} {\bibfnamefont {M.~D.}\ \bibnamefont {Lukin}},\ }\href {\doibase
  10.1103/PhysRevLett.123.170503} {\bibfield  {journal} {\bibinfo  {journal}
  {Phys. Rev. Lett.}\ }\textbf {\bibinfo {volume} {123}},\ \bibinfo {pages}
  {170503} (\bibinfo {year} {2019})}\BibitemShut {NoStop}%
\bibitem [{\citenamefont {Eldredge}\ \emph {et~al.}(2017)\citenamefont
  {Eldredge}, \citenamefont {Gong}, \citenamefont {Young}, \citenamefont
  {Moosavian}, \citenamefont {{Foss-Feig}},\ and\ \citenamefont
  {Gorshkov}}]{Eldredge2017}%
  \BibitemOpen
  \bibfield  {author} {\bibinfo {author} {\bibfnamefont {Z.}~\bibnamefont
  {Eldredge}}, \bibinfo {author} {\bibfnamefont {Z.-X.}\ \bibnamefont {Gong}},
  \bibinfo {author} {\bibfnamefont {J.~T.}\ \bibnamefont {Young}}, \bibinfo
  {author} {\bibfnamefont {A.~H.}\ \bibnamefont {Moosavian}}, \bibinfo {author}
  {\bibfnamefont {M.}~\bibnamefont {{Foss-Feig}}}, \ and\ \bibinfo {author}
  {\bibfnamefont {A.~V.}\ \bibnamefont {Gorshkov}},\ }\href {\doibase
  10.1103/PhysRevLett.119.170503} {\bibfield  {journal} {\bibinfo  {journal}
  {Phys. Rev. Lett.}\ }\textbf {\bibinfo {volume} {119}},\ \bibinfo {pages}
  {170503} (\bibinfo {year} {2017})}\BibitemShut {NoStop}%
\bibitem [{\citenamefont {Tran}\ \emph {et~al.}(2020)\citenamefont {Tran},
  \citenamefont {Chen}, \citenamefont {Ehrenberg}, \citenamefont {Guo},
  \citenamefont {Deshpande}, \citenamefont {Hong}, \citenamefont {Gong},
  \citenamefont {Gorshkov},\ and\ \citenamefont {Lucas}}]{Tran2020}%
  \BibitemOpen
  \bibfield  {author} {\bibinfo {author} {\bibfnamefont {M.~C.}\ \bibnamefont
  {Tran}}, \bibinfo {author} {\bibfnamefont {C.-F.}\ \bibnamefont {Chen}},
  \bibinfo {author} {\bibfnamefont {A.}~\bibnamefont {Ehrenberg}}, \bibinfo
  {author} {\bibfnamefont {A.~Y.}\ \bibnamefont {Guo}}, \bibinfo {author}
  {\bibfnamefont {A.}~\bibnamefont {Deshpande}}, \bibinfo {author}
  {\bibfnamefont {Y.}~\bibnamefont {Hong}}, \bibinfo {author} {\bibfnamefont
  {Z.-X.}\ \bibnamefont {Gong}}, \bibinfo {author} {\bibfnamefont {A.~V.}\
  \bibnamefont {Gorshkov}}, \ and\ \bibinfo {author} {\bibfnamefont
  {A.}~\bibnamefont {Lucas}},\ }\href@noop {} {\  (\bibinfo {year} {2020})},\
  \Eprint {http://arxiv.org/abs/2001.11509} {arXiv:2001.11509} \BibitemShut
  {NoStop}%
\bibitem [{\citenamefont {Kuwahara}\ and\ \citenamefont
  {Saito}(2019)}]{Kuwahara2019a}%
  \BibitemOpen
  \bibfield  {author} {\bibinfo {author} {\bibfnamefont {T.}~\bibnamefont
  {Kuwahara}}\ and\ \bibinfo {author} {\bibfnamefont {K.}~\bibnamefont
  {Saito}},\ }\href@noop {} {\  (\bibinfo {year} {2019})},\ \Eprint
  {http://arxiv.org/abs/1910.14477} {arXiv:1910.14477} \BibitemShut {NoStop}%
\bibitem [{note_ions()}]{note_ions}%
  \BibitemOpen
  \bibinfo {note} {We note that in Refs.~\cite{Maslov2018,Lu2019}, it was shown
  that the unbounded quantum fanout could be implemented in trapped-ion systems
  using a global Molmer-Sorensen gate. However, those methods lead to gate
  durations that scale superlinearly in the system size in order to achieve a
  certain value of state fidelity and, as such, do not yield an asymptotic
  speed-up.}\BibitemShut {Stop}%
\bibitem [{\citenamefont {Coppersmith}(1994)}]{Coppersmith94}%
  \BibitemOpen
  \bibfield  {author} {\bibinfo {author} {\bibfnamefont {D.}~\bibnamefont
  {Coppersmith}},\ }\href@noop {} {\bibfield  {journal} {\bibinfo  {journal}
  {IBM Technical Report No. RC19642}\ } (\bibinfo {year} {1994})},\ \Eprint
  {http://arxiv.org/abs/arXiv:quant-ph/0201067} {arXiv:quant-ph/0201067}
  \BibitemShut {NoStop}%
\bibitem [{note_constanterror()}]{note_constanterror}%
  \BibitemOpen
  \bibinfo {note} {Showing hardness with any additive error lesser than some
  small constant is more stringent than showing hardness for
  $\varepsilon=\O{1/\text{poly}(n)}$.}\BibitemShut {Stop}%
\bibitem [{\citenamefont {Cleve}\ and\ \citenamefont
  {Watrous}(2000)}]{Cleve2000}%
  \BibitemOpen
  \bibfield  {author} {\bibinfo {author} {\bibfnamefont {R.}~\bibnamefont
  {Cleve}}\ and\ \bibinfo {author} {\bibfnamefont {J.}~\bibnamefont
  {Watrous}},\ }in\ \href {\doibase 10.1109/SFCS.2000.892140} {\emph {\bibinfo
  {booktitle} {Proceedings of the 41st {{Annual Symposium}} on {{Foundations}}
  of {{Computer Science}}}}},\ \bibinfo {series and number} {{{FOCS}} '00}\
  (\bibinfo  {publisher} {{IEEE Computer Society}},\ \bibinfo {address}
  {{USA}},\ \bibinfo {year} {2000})\ p.\ \bibinfo {pages} {526}\BibitemShut
  {NoStop}%
\bibitem [{\citenamefont {Lenstra}\ and\ \citenamefont
  {Pomerance}(1992)}]{Lenstra1992}%
  \BibitemOpen
  \bibfield  {author} {\bibinfo {author} {\bibfnamefont {H.~W.}\ \bibnamefont
  {Lenstra}}\ and\ \bibinfo {author} {\bibfnamefont {C.}~\bibnamefont
  {Pomerance}},\ }\href {\doibase 10.1090/S0894-0347-1992-1137100-0} {\bibfield
   {journal} {\bibinfo  {journal} {J. Amer. Math. Soc.}\ }\textbf {\bibinfo
  {volume} {5}},\ \bibinfo {pages} {483} (\bibinfo {year} {1992})}\BibitemShut
  {NoStop}%
\bibitem [{\citenamefont {Osborne}(2006)}]{Osborne2006}%
  \BibitemOpen
  \bibfield  {author} {\bibinfo {author} {\bibfnamefont {T.~J.}\ \bibnamefont
  {Osborne}},\ }\href {\doibase 10.1103/PhysRevLett.97.157202} {\bibfield
  {journal} {\bibinfo  {journal} {Phys. Rev. Lett.}\ }\textbf {\bibinfo
  {volume} {97}},\ \bibinfo {pages} {157202} (\bibinfo {year}
  {2006})}\BibitemShut {NoStop}%
\bibitem [{\citenamefont {Maslov}(2007)}]{Maslov2007}%
  \BibitemOpen
  \bibfield  {author} {\bibinfo {author} {\bibfnamefont {D.}~\bibnamefont
  {Maslov}},\ }\href {\doibase 10.1103/PhysRevA.76.052310} {\bibfield
  {journal} {\bibinfo  {journal} {Phys. Rev. A}\ }\textbf {\bibinfo {volume}
  {76}},\ \bibinfo {pages} {052310} (\bibinfo {year} {2007})}\BibitemShut
  {NoStop}%
\bibitem [{\citenamefont {Lieb}\ and\ \citenamefont
  {Robinson}(1972)}]{Lieb1972}%
  \BibitemOpen
  \bibfield  {author} {\bibinfo {author} {\bibfnamefont {E.~H.}\ \bibnamefont
  {Lieb}}\ and\ \bibinfo {author} {\bibfnamefont {D.~W.}\ \bibnamefont
  {Robinson}},\ }\href {\doibase 10.1007/BF01645779} {\bibfield  {journal}
  {\bibinfo  {journal} {Commun. Math. Phys.}\ }\textbf {\bibinfo {volume}
  {28}},\ \bibinfo {pages} {251} (\bibinfo {year} {1972})}\BibitemShut
  {NoStop}%
\bibitem [{\citenamefont {Chen}\ and\ \citenamefont {Lucas}(2019)}]{Chen2019}%
  \BibitemOpen
  \bibfield  {author} {\bibinfo {author} {\bibfnamefont {C.-F.}\ \bibnamefont
  {Chen}}\ and\ \bibinfo {author} {\bibfnamefont {A.}~\bibnamefont {Lucas}},\
  }\href {\doibase 10.1103/PhysRevLett.123.250605} {\bibfield  {journal}
  {\bibinfo  {journal} {Phys. Rev. Lett.}\ }\textbf {\bibinfo {volume} {123}},\
  \bibinfo {pages} {250605} (\bibinfo {year} {2019})}\BibitemShut {NoStop}%
\bibitem [{note_norm()}]{note_norm}%
  \BibitemOpen
  \bibinfo {note} {See Ref.\,\cite{Tran2020} for a more precise definition. We
  note that our projector $\mathcal{Q}_r$ is a variation on the $\mathbb{Q}_r$
  used in {Ref.\,\cite{Tran2020}}. Specifically, for $D=1$, they are related
  via $\mathcal{Q}_r = \sum_{x:\,|x|\ge r} \mathbb{Q}_x$.}\BibitemShut {Stop}%
\bibitem [{\citenamefont {Tran}\ \emph {et~al.}(2019)\citenamefont {Tran},
  \citenamefont {Guo}, \citenamefont {Su}, \citenamefont {Garrison},
  \citenamefont {Eldredge}, \citenamefont {Foss-Feig}, \citenamefont {Childs},\
  and\ \citenamefont {Gorshkov}}]{Tran2019}%
  \BibitemOpen
  \bibfield  {author} {\bibinfo {author} {\bibfnamefont {M.~C.}\ \bibnamefont
  {Tran}}, \bibinfo {author} {\bibfnamefont {A.~Y.}\ \bibnamefont {Guo}},
  \bibinfo {author} {\bibfnamefont {Y.}~\bibnamefont {Su}}, \bibinfo {author}
  {\bibfnamefont {J.~R.}\ \bibnamefont {Garrison}}, \bibinfo {author}
  {\bibfnamefont {Z.}~\bibnamefont {Eldredge}}, \bibinfo {author}
  {\bibfnamefont {M.}~\bibnamefont {Foss-Feig}}, \bibinfo {author}
  {\bibfnamefont {A.~M.}\ \bibnamefont {Childs}}, \ and\ \bibinfo {author}
  {\bibfnamefont {A.~V.}\ \bibnamefont {Gorshkov}},\ }\href {\doibase
  10.1103/PhysRevX.9.031006} {\bibfield  {journal} {\bibinfo  {journal} {Phys.
  Rev. X}\ }\textbf {\bibinfo {volume} {9}},\ \bibinfo {pages} {031006}
  (\bibinfo {year} {2019})}\BibitemShut {NoStop}%
\bibitem [{\citenamefont {{Hastings}}\ and\ \citenamefont
  {{Koma}}(2006)}]{Hastings2006}%
  \BibitemOpen
  \bibfield  {author} {\bibinfo {author} {\bibfnamefont {M.~B.}\ \bibnamefont
  {{Hastings}}}\ and\ \bibinfo {author} {\bibfnamefont {T.}~\bibnamefont
  {{Koma}}},\ }\href {\doibase 10.1007/s00220-006-0030-4} {\bibfield  {journal}
  {\bibinfo  {journal} {Comm. Math. Phys.}\ }\textbf {\bibinfo {volume}
  {265}},\ \bibinfo {pages} {781} (\bibinfo {year} {2006})}\BibitemShut
  {NoStop}%
\bibitem [{\citenamefont {Epstein}\ and\ \citenamefont
  {Whaley}(2017)}]{Epstein2017}%
  \BibitemOpen
  \bibfield  {author} {\bibinfo {author} {\bibfnamefont {J.~M.}\ \bibnamefont
  {Epstein}}\ and\ \bibinfo {author} {\bibfnamefont {K.~B.}\ \bibnamefont
  {Whaley}},\ }\href {\doibase 10.1103/PhysRevA.95.042314} {\bibfield
  {journal} {\bibinfo  {journal} {Phys. Rev. A}\ }\textbf {\bibinfo {volume}
  {95}},\ \bibinfo {pages} {042314} (\bibinfo {year} {2017})}\BibitemShut
  {NoStop}%
\bibitem [{\citenamefont {Guo}\ \emph {et~al.}(2019)\citenamefont {Guo},
  \citenamefont {Tran}, \citenamefont {Childs}, \citenamefont {Gorshkov},\ and\
  \citenamefont {Gong}}]{Guo2019a}%
  \BibitemOpen
  \bibfield  {author} {\bibinfo {author} {\bibfnamefont {A.~Y.}\ \bibnamefont
  {Guo}}, \bibinfo {author} {\bibfnamefont {M.~C.}\ \bibnamefont {Tran}},
  \bibinfo {author} {\bibfnamefont {A.~M.}\ \bibnamefont {Childs}}, \bibinfo
  {author} {\bibfnamefont {A.~V.}\ \bibnamefont {Gorshkov}}, \ and\ \bibinfo
  {author} {\bibfnamefont {Z.-X.}\ \bibnamefont {Gong}},\ }\href@noop {} {\
  (\bibinfo {year} {2019})},\ \Eprint {http://arxiv.org/abs/1906.02662}
  {arXiv:1906.02662} \BibitemShut {NoStop}%
\bibitem [{note_interleaving()}]{note_interleaving}%
  \BibitemOpen
  \bibinfo {note} {Specifically, for even $n$, this mapping is defined by
  $q_i\mapsto q_{2i-1}$ for $i \le n/2$ and $q_i\mapsto q_{2(n-i+1)}$ for $i >
  n/2$.}\BibitemShut {Stop}%
\bibitem [{\citenamefont {Maslov}\ and\ \citenamefont
  {Nam}(2018)}]{Maslov2018}%
  \BibitemOpen
  \bibfield  {author} {\bibinfo {author} {\bibfnamefont {D.}~\bibnamefont
  {Maslov}}\ and\ \bibinfo {author} {\bibfnamefont {Y.}~\bibnamefont {Nam}},\
  }\href {\doibase 10.1088/1367-2630/aaa398} {\bibfield  {journal} {\bibinfo
  {journal} {New Journal of Physics}\ }\textbf {\bibinfo {volume} {20}},\
  \bibinfo {pages} {033018} (\bibinfo {year} {2018})}\BibitemShut {NoStop}%
\bibitem [{\citenamefont {Lu}\ \emph {et~al.}(2019)\citenamefont {Lu},
  \citenamefont {Zhang}, \citenamefont {Zhang}, \citenamefont {Chen},
  \citenamefont {Shen}, \citenamefont {Zhang}, \citenamefont {Zhang},\ and\
  \citenamefont {Kim}}]{Lu2019}%
  \BibitemOpen
  \bibfield  {author} {\bibinfo {author} {\bibfnamefont {Y.}~\bibnamefont
  {Lu}}, \bibinfo {author} {\bibfnamefont {S.}~\bibnamefont {Zhang}}, \bibinfo
  {author} {\bibfnamefont {K.}~\bibnamefont {Zhang}}, \bibinfo {author}
  {\bibfnamefont {W.}~\bibnamefont {Chen}}, \bibinfo {author} {\bibfnamefont
  {Y.}~\bibnamefont {Shen}}, \bibinfo {author} {\bibfnamefont {J.}~\bibnamefont
  {Zhang}}, \bibinfo {author} {\bibfnamefont {J.-N.}\ \bibnamefont {Zhang}}, \
  and\ \bibinfo {author} {\bibfnamefont {K.}~\bibnamefont {Kim}},\ }\href
  {\doibase 10.1038/s41586-019-1428-4} {\bibfield  {journal} {\bibinfo
  {journal} {Nature}\ }\textbf {\bibinfo {volume} {572}},\ \bibinfo {pages}
  {363} (\bibinfo {year} {2019})}\BibitemShut {NoStop}%
\end{thebibliography}%
\end{document}